\documentclass[conference]{IEEEtran}
%

\usepackage{psfrag}

\ifCLASSINFOpdf
\else
\fi
%
%

%
\usepackage[cmex10]{amsmath}

\usepackage{mdwmath}
\usepackage{mdwtab}
\hyphenation{op-tical net-works semi-conduc-tor}

\usepackage{algorithm}
\usepackage{algorithmic}
\usepackage{graphicx,amsmath,amssymb,latexsym,subfigure}
\usepackage{psfrag}
\newtheorem{theorem}{Theorem}

\begin{document}
%

\title{Finite Horizon Online Lazy Scheduling with Energy Harvesting Transmitters over Fading Channels}


\author{\IEEEauthorblockN{Baran Tan Bacinoglu and Elif Uysal-Biyikoglu}
\IEEEauthorblockA{Dept. of Electrical and Electronics Eng., METU\\
Ankara, TURKEY 06800\\
Telephone: +90 (312) 210--5764\\
e--mail: tbacinoglu@gmail.com, elif@eee.metu.edu.tr}}


%


\maketitle

\begin{abstract}

Lazy scheduling, i.e. setting transmit power and rate in response to data traffic as low as possible so as to satisfy delay constraints, is a known method for energy efficient transmission. This paper addresses an online lazy scheduling problem over finite time-slotted transmission window and introduces low-complexity heuristics which attain near-optimal performance. Particularly, this paper generalizes lazy scheduling problem for energy harvesting systems to deal with packet arrival, energy harvesting and time-varying channel processes simultaneously. The time-slotted formulation of the problem and depiction of its offline optimal solution provide explicit expressions allowing to derive good online policies and algorithms.     
\end{abstract}


%
\IEEEpeerreviewmaketitle

\section{Introduction}

\def\eg{{e.g.}}
\def\ie{{i.e.}}

There have been offline and online problem formulations for energy efficient packet scheduling with data arrival and deadline constraints (e.g., \cite{UPE02,BeGa02,NuSr02,ZaMo09}) as well as intermittent energy availability constraints e.g., \cite{YaU2010,TuYe2010, MAAEUHE2010, Shroff2011, Tassiulas2010,Ozel2011}. Offline policies make their computations with complete prior knowledge of data/energy/channel variations, which is rarely a practical assumption, while providing performance benchmarks and an understanding of the structure of optimal rate/power adaptation. One criticism that offline formulations regularly received is that the resulting offline policies did little to suggest good online policies. On the other hand, direct online formulations have been disconnected from offline formulations and the resulting policies (optimal policies or heuristics) have lacked the appealing structure of offline policies.

This paper presents an approach for going from an offline formulation to an online policy. The problem posed (presented in section\ref{sec:general}) embodies data and energy causality constraints, as well as channel variations in a discrete time formulation, such that there is a finite number of time slots in which the data needs to be sent. Such a finite-horizon formulation is not only realistic considering practical scenarios, but also makes a substantial difference in in the nature of the resulting policies (see, e.g. ~\cite{HoZang2010} in contrast with ~\cite{Sharma}.) As observed in ~\cite{Shroff2011}, optimality in the finite horizon, as opposed to infinite horizon, requires a much finer control. Depending on the short-term statistics of channel, energy and traffic, a long-term throughput optimal (i.e. stable) policy (e.g., \cite{Sharma}) may be quite far off in terms of finite-horizon throughput or energy consumption, compared with even a simple adaptive heuristic ~\cite{OurJournalSubmission}.

The rest of the paper proceeds in two main parts. First,  in section \ref{sec:onlinelazyprobdef}, a basic online {\emph{lazy scheduling}}~\cite{PUE01} problem is posed and solved. In this basic problem, the goal is to schedule packets arriving in a finite time window while minimizing the energy cost. Secondly, a more general formulation is made in Section \ref{sec:general} with energy harvests and channel state variation are included on top of data arrivals. This leverages the solution of the lazy scheduling problem and the Expected Threshold scheduling heuristic developed in \ref{sec:heuristic}. The proposed policies are tested with respect to online optimal adaptation on simulations, followed by discussion and conclusions.

\section{Online Lazy Scheduling}
\label{sec:onlinelazyprobdef}
Consider a network node receiving arbitrary amounts of data that need to be transmitted to a further destination within a finite time interval. The problem is to adjust the outgoing transmission rate (jointly, transmit power) in time in response to the incoming data traffic, to minimize the expected total energy consumption. Let us define a slot duration as the smallest interval of time in between two adjustments of rate/power. 

We will consider a finite period of $n$ slots in which the transmission needs to be completed.  Of course, delivering all received bits within this period may not be guaranteed by some policies. To account for this, a cost $C\left(b\right)$ is assigned for retaining a backlog of $b$ bits at the end of the time horizon. Naturally, $C\left(b\right)$ is monotone nondecreasing and equals zero at $b=0$.

Let the data arrival process $\lbrace B_n \rbrace, n\geq 1$ be a discrete time Markov process, such that $B_n\geq 0$ is the size of the data packet (in bits) received at the beginning of slot $n$.  The energy used during one slot for transmitting at rate $r$ (bits/slot) will be given by the following: 
\begin{equation}
 e(b,r)= p(r)\min\left( \frac{b}{r},1\right) 
\label{eq:Wr}
\end{equation}
The power function, $p(r)$, will be assumed to be convex and increasing in $r$. The minimum function takes care of the case when $b$, the number of bits stored in the in the buffer at the beginning of the slot, is less than $r$, in which case the transmission will cover only part of the slot.

Let $i$ be the packet arrival state and $l(i)$ be the function that returns the corresponding packet length when the packet arrival state is $i$. Then, the state of the system at time $n$ is determined by the vector $(b(n),i(n))$. No buffer limits are imposed, meaning the sender can store arbitrary amounts of data. When the system state is $(b,i)$, let $J_n \left( b,i\right) $ be the minimum expected total energy consumption from the current time, $n$, until the end of the time horizon. Then, the problem can be formulated using 
a stochastic dynamic programming equation as below.
 \begin{equation}
 J_n \left( b,i\right)= \displaystyle\min_{r}\left[e(b,r)+ \displaystyle\sum_{j}A_{ij}J_{n+1} \left( (b-r)_{+}+l(j),j\right)\right] 
\label{eq:Jn}
\end{equation}
  
where $A_{ij}$ is the transition probability from packet arrival state $i$ to state $j$. The optimal solution (that minimizes the expected energy consumption) is given by:

\begin{equation}
r_{n}^{*}= \displaystyle\arg\min_{r}\left[e(b,r)+ \displaystyle\sum_{j}A_{ij}J_{n+1} \left( (b-r)_{+}+l(j),j\right)\right] 
\label{eq:rn}
\end{equation}
 
The cost function $J_n \left( b,i\right) $ and minimizing rates $r_{n}^{*}$'s can be computed by backward induction starting from the last slot. The function  $J_{N+1} \left( b,i\right) $ can be interpreted as a penalty function which corresponds to the cost of maintaining bits of packets not delivered until the deadline. Accordingly, $J_{N+1} \left( b,i\right) = C\left(b-l(i)\right)$.

To model the natural restriction in practical systems where the choice of transmission rates are limited, the rate  $r_{n}$ will be assumed to belong to a discrete set $\mathbf{V}$.

The optimal solution of the dynamic programming formulation in Eq. \ref{eq:Jn}, obtained by backward induction, has exponential complexity over the product of $\mathbf{V}$ and state space, such that, the energy required for computing such a solution may well exceed the energy savings it was designed to achieve. In the following we exhibit a simple yet efficient policy, that we call the "Expected Threshold Lazy Scheduling Policy", performing close to the dynamic programming optimal solution.

\section{Expected Threshold Lazy Scheduling Policy}
\label{sec:etls} 
Let us start by considering the offline solution where transmission rates are not restricted to a discrete set. The \emph{stretched string} method \cite{ZaMo09} can be employed to find optimal offline transmission rates since the optimal departure curve follows the shortest path. Depending on the cost function $C\left(b\right)$, the optimal offline solution does not have to be the one that completes transmission of all incoming data by the end of the time horizon, instead, it may retain a certain amount in the buffer in exchange for minimizing the energy consumption on the rest of the data. That case is effectively captured by the time horizon virtually extended to the point where the data buffer is emptied with the transmission rate selected at the last time slot. (As there are no arrivals after $T$, there is no reason to change the rate from the beginning of the last slot until the end of the extended period. Optimality of keeping a constant rate follows from convexity of $p(r)$.) 

For a given cost function, the amount of extra time $\alpha$ will be as a result of the offline solution. In particular, if the cost function was in the form

\begin{equation}
C\left(b\right)= \tau p(\frac{b}{\tau})
\end{equation}  

Then, the extension $\alpha$ is exactly equal to $\tau$. In general,  the optimal offline rate $\tilde{r}_{n}^{*}$ can be expressed as a function of $\alpha$ as below:

\[
\tilde{r}_{n}^{*}= \displaystyle\min_{a=1,....,(N+\alpha-n)} \left( b_{n},\tilde{r}_{n}(b_{n},a)\right) ,\text{  where }
\]
\[
\tilde{r}_{n}(b_{n},a)=[b_{n}+\displaystyle\sum_{l=n+1}^{n+a} B_{l}]/a
\]

where $B_{n}$ represents the size of the packet arriving becoming available at the beginning of slot $n$ ($B_l=0$, $l>t$) The case $\alpha\to 0$  corresponds to  the cost function $C\left(b\right)\to \infty$ for any value of $b>0$. An online lazy schedule can be constructed by setting transmission rate to the expectation of the offline transmission rate given above. For simplicity, $E[\tilde{r}_{n}^{*}]$ can be approximated and the following expression for online decisions can be derived:

\begin{equation}
r_{n}= \displaystyle\min \left\lbrace \rho \in \mathbf{V} \vert \rho > \displaystyle\max(b_{n},E\left[ \tilde{r}_{n}(b_{n},1) \vert B_{n}^{N}\right] \right\rbrace
\label{rebolapa}  
\end{equation}

Alternatively, considering the stretched string visualization:
\begin{eqnarray*}
\\r_{n} &
=&
 \displaystyle\min \left\lbrace \rho \in \mathbf{V} \vert L_{n}(B_{n}^{N},r) \geq b_{n} \right\rbrace
\\ L_{n}(B_{n}^{N},r)&=& \displaystyle\max\left( r,rn+r\alpha - \displaystyle\sum_{l=1}^{n-1}E\left[B_{l} \vert B_{n}^{N}\right]\right) 
\\\mbox{for $r \neq r_{max}$}&;& L_{n}(B_{n}^{N},r_{max})=\infty 
\end{eqnarray*}

where $B_{n}^{N}$  is the vector of packet sizes $[B_{n}...B_{N}]$ and $L_{n}(B_{n}^{N},r)$ is the data buffer threshold for selecting transmission rate $r$. In the following, we include several simple and basic suboptimal solutions for comparison purposes.


\par{{\textbf{Hasty Policy:}}}
This is essentially a greedy policy. Contrary to the conservative policy above, it always selects the highest possible transmission rate $r_{max}$ in the set $\mathbf{V}$ that will not cause idleness in the next slot. The rate allocation is $r_{\rm{hasty}}(b)= \max \left\lbrace r \in \mathbf{V} \vert r < b\right\rbrace$. This policy is likely to perform well for a steep cost function $C\left(b\right)$, but of course, as it will be unnecessarily hasty at times, it is still suboptimal.

\par{{\textbf{Constant Rate Policy:}}}
This policy uses a single transmission rate that keeps the data buffer stable. In particular,  it selects the lowest transmission rate in the set $\mathbf{V}$ which is above the average data arrival rate. In the case where the arrival rate approaches the chosen constant rate (from below), the policy is asymptotically optimal, i.e. throughput maximizing. Yet, as will be shown in simulations, it may be far from optimal in the short term.

\section{Generalized Online Lazy Scheduling}
\label{sec:general} 
The online lazy scheduling problem defined in the previous section does not consider energy harvests or channel variation.  A generalization will now be made on this problem by considering an energy arrival process and the possibility of energy depletion during transmission, as well as channel states changing from slot to slot. 

\subsection{Problem Definition}
Let $\lbrace H_n \rbrace$, $\lbrace B_n \rbrace$ and $\lbrace \gamma_n \rbrace$ be discrete time Markov processes representing energy arrivals, packet arrivals and channel fading, respectively, where $n$ is the time slot index. Particularly, $H_n$ is the amount energy that becomes available in slot $n$ (harvested during slot $n-1$), $B_n$ is the size of the packet that becomes available at the beginning of slot $n$ and $\gamma_n$ is the average channel gain level in slot $n$. As in the online lazy problem, the objective of this problem is to minimize a cost which is the sum of the energy cost of undelivered data and the total energy consumed within a transmission time window of $N$ slots.

By the Markovian assumption, in order to make the optimal decision for transmission power and rate, it is sufficient to know the present battery level, data buffer state and channel gain. Let $(e_n, b_n,\gamma_n)$ be the state vector representing these values on time slot $n$.  The following dynamic programming equation relates the cost $J_n(e_n,b_n,\gamma_n)$ of being in this state at time $n$ to the cost for the next time slot $n+1$ where $s$ denotes the length of a slot:

\begin{eqnarray*}
&J_n(e_n,b_n,\gamma_n)=& \\
&\displaystyle\min_{(\rho_{n},r_{n})\in \mathbf{M}} [ \dot{s}_{n}\rho_{n}+& \\
&E[J_{n+1}( e_{n} - \! \dot{s}_{n}\rho_{n}\! +\! H_{n+1},\! b_{n}\! -\dot{s}_{n}r_{n}+B_{n+1},\gamma_{n+1})]  ]  &
\end{eqnarray*}

where 
\[
\dot{s}_{n}=s\min(\frac{e_{n}}{s\rho_{n}},\frac{b_{n}}{sr_{n}},1)
\]
$\rho_{n}$ and $r_{n}$ are transmission power and rate decisions which are chosen from a finite set $\mathbf{M}$.

The cost at the last time slot of transmission window is given by:
\begin{equation}
J_N(e_N,b_N,\gamma_N)= \displaystyle\min_{(\rho_N,r_N)\in \mathbf{M}} \left[  \dot{s}_{N}\rho_{n}+E[C(b_{N}-\dot{s}_{N}r_{N})]\right]
\end{equation}

For a given transmision power level $\rho_{n}$, transmission rate $r_{n}$ is restricted to a maximum value of transmission rate which is determined by a certain bit error rate and channel gain level. Therefore, the set $\mathbf{M}$ is also a function of channel state $\gamma_n$.

Transmission rate $r_{n}$ may be a function of signal to noise ratio (SNR) which is essentially proportional to the product of transmission power and channel gain. ($r_{n}=g(\rho_{n}\gamma_n)$)

Accordingly, the optimal decision for a time slot $n$ can be expressed by a pair of transmission power and rate $(\rho_{n},r_{n})$ or only one of them if they are one-to-one related.

In its most general form, the dynamic programming formulation of the problem suffers from being high dimensional since it requires states to be evaluated individually. For this reason, rather than inspecting dynamic programming solution, we take an alternative approach and introduce an online heuristic solution benefiting from optimal offline solution of the same problem as it is done in section \ref{sec:etls} for ETLS policy.

\subsection{Offline Solution}

The offline solution derived here covers a particular version of the problem making two basic assumptions: The transmission power and rate have a one-to-one relation through the AWGN channel capacity formula and the energy cost of retaining $b$ amount of data $C(b)$ is infinite for any nonzero value of $b$. 

In the rest, let $r_{n}$ be equal to the AWGN capacity of the channel such that $r_{n}=W\log_{2}(1+\rho_{n}\gamma_n)$ where $W$ is bandwidth of the channel. Then, energy and packet arrival contraints for a time slot $n$ can be expressed by at most $2T-2n+2$ inequalities. ($N-n+1$ for energy arrivals and $N-n+1$ for packet arrivals.)

\begin{equation}
s\displaystyle\sum_{l=n}^{n+u}\rho_{l} \leq e_{n}+ \displaystyle\sum_{l=n+1}^{n+u} H_{l}, u=1,2,.....,(N-n),
\end{equation}
\[
s\rho_{n} \leq e_{n}
\]
\begin{equation}
\displaystyle\sum_{l=n}^{n+v} sW \log_{2}(1+\rho_{l}\gamma_{l})\leq b_{n}+ \displaystyle\sum_{l=n+1}^{n+v} B_{l}, v=0,1,2,.....,(N-n)
\end{equation}
\[
 sW \log_{2}(1+\rho_{n}\gamma_{n})\leq b_{n}
\]
where $s$ is the length of a time slot.

Let the transmission power decision $\rho_{n}$ be determined by a {\emph{water level}} $w_{n}$ so that $\rho_{n}=(w_{n}-\frac{1}{\gamma_{n}})_{+}$. (This is required to minimize energy consumption per transmitted data under given channel constraints, see, \eg, \cite{HoZang2010}.) Then, the above inequalities can be rewritten as in below:

\begin{equation}
\displaystyle\sum_{l=n}^{n+u}s(w_{l}-\frac{1}{\gamma_{l}})_{+} \leq e_{n}+ \displaystyle\sum_{l=n+1}^{n+u} H_{l},
\end{equation}
\[
 u=0,1,2,.....,(N-n)
\]
\begin{equation}
\displaystyle\sum_{l=n}^{n+v} sW \log_{2}(1+(w_{l}-\frac{1}{\gamma_{l}})_{+}\gamma_{l})\leq b_{n}+ \displaystyle\sum_{l=n+1}^{n+v} B_{l}, 
\end{equation}
\[
v=0,1,2,.....,(N-n)
\]
The water level $w_{n}$ should be nondecreasing in time ($w_{n}\leq w_{n+1}$) because otherwise one can always reallocate consumed energy and transmitted bits to improve overall energy/bit efficiency without violating causality constraints due to energy and packet arrivals. 

\begin{theorem}
In an optimal offline transmission schedule, the water level $w_{n}$ is non-decreasing with slot index $n$.
\end{theorem}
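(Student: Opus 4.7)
The plan is a straightforward exchange argument by contradiction. Suppose some optimal offline schedule has $w_n > w_{n+1}$ for some slot $n$; I will construct a feasible perturbation that strictly reduces total energy by modifying only the two consecutive water levels, contradicting optimality. Working first in the regime where both slots are active ($w_n > 1/\gamma_n$ and $w_{n+1} > 1/\gamma_{n+1}$, so that $\rho_l = w_l - 1/\gamma_l$), the bits transmitted and energy consumed across slots $n$ and $n+1$ simplify to
\[
sW\log_2\!\bigl(w_n w_{n+1}\gamma_n\gamma_{n+1}\bigr) \quad \text{and} \quad s(w_n+w_{n+1}) - s/\gamma_n - s/\gamma_{n+1}.
\]
The key observation is that holding the product $w_n w_{n+1}$ fixed preserves total bits, while the sum $w_n+w_{n+1}$ (up to a channel-dependent constant, the total energy) is strictly minimized at $w_n=w_{n+1}$ by AM--GM.

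Accordingly, I take the local perturbation $(w_n,w_{n+1}) \mapsto (w_n-\delta,\,w_{n+1}+\delta')$ with $\delta>0$ small and $\delta'>0$ chosen to preserve the product, i.e.\ $(w_n-\delta)(w_{n+1}+\delta')=w_n w_{n+1}$. To first order $\delta' \approx (w_{n+1}/w_n)\delta < \delta$, so the sum strictly decreases by $\delta-\delta'>0$ while the product, and hence the bits sent in these two slots, is unchanged. To confirm feasibility I would check the two causality families from the excerpt: for cumulative horizons $k<n$ nothing changes; at $k=n$ both cumulative transmitted energy and cumulative transmitted bits decrease strictly, so both inequalities gain slack; for $k\geq n+1$ cumulative bits are unchanged by construction and cumulative energy decreases by $s(\delta-\delta')>0$, so again both inequalities are preserved with slack. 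Hence the perturbed schedule is feasible and consumes strictly less energy, contradicting optimality.

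The main obstacle is the boundary case in which slot $n+1$ (or slot $n$) is inactive, so $\rho_{n+1}=(w_{n+1}-1/\gamma_{n+1})_+=0$ and the water-level representation at that slot is not unique. I would handle this by adopting the canonical representative $\tilde{w}_n := \max(w_n,1/\gamma_n)$ and re-running the exchange: if $\tilde{w}_n > \tilde{w}_{n+1}$, a small shift $\delta$ out of slot $n$ can always be absorbed in slot $n+1$ by raising $w_{n+1}$ slightly above $1/\gamma_{n+1}$, activating it; the same product-preserving/sum-decreasing bookkeeping then applies, with strict convexity of $r \mapsto (2^{r/W}-1)/\gamma$ giving the same strict energy reduction. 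A minor side point to verify is that $\delta$ can be chosen small enough that $w_n-\delta$ stays in the active regime and $w_{n+1}+\delta'$ does not overshoot $w_{n+2}$, both of which are automatic if one first passes to the smallest index $n$ where monotonicity is violated.
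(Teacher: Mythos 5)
Your proposal is correct and takes essentially the same route as the paper's own proof: a local exchange between the two consecutive slots, using the fact that for fixed total bits the energy $s(w_n+w_{n+1})$ (up to channel constants) is minimized---equivalently, for fixed energy the throughput is maximized---by equalizing the two water levels, with feasibility guaranteed because energy and bits can always be deferred from slot $n$ to slot $n+1$. Your write-up is simply more explicit than the paper's (infinitesimal product-preserving perturbation instead of outright equalization, a slot-by-slot check of the cumulative causality constraints, and the inactive-slot boundary case, which the paper's argument glosses over).
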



\begin{proof}
We will show that if the water level of any slot $n$ is higher than the water level of the next slot $n+1$ ($w_{n} > w_{n+1}$), then, there is an offline transmision schedule which achieves at least the same throughput or consumes at the most the same amount of energy with the initial schedule. Consider the energy allocation and total throughput obtained in the period consisting of slots $n$ and $n+1$. Since the slot $n$ is the predecessor of the slot $n+1$, energy consumed or data transmitted within the slot $n$ can be transferred to  the slot $n+1$. 
The total throughput for slot $n$ and $n+1$ is equal to the following expression:

\[
\left( \log_{2}(w_{n})-\log_{2}(\frac{1}{\gamma_{n}})\right)_{+}+\left( \log_{2}(w_{n+1})-\log_{2}(\frac{1}{\gamma_{n+1}})\right)_{+}
\]

which can be maximized when $w_{n}=w_{n+1}$ if the total consumed energy for slot $n$ and $n+1$ ($( w_{n}-\frac{1}{\gamma_{n}}) _{+}+( w_{n+1}-\frac{1}{\gamma_{n+1}})_{+}$) is fixed. Similary, if the total throughput for slot $n$ and $n+1$ is fixed, the total consumed energy for slot $n$ and $n+1$ can  minimized by setting $w_{n}$ and $w_{n+1}$ to a common level. Therefore, if $w_{n} > w_{n+1}$,  reassigning water levels so that $w_{n} = w_{n+1}$ by keeping consumed energy or transmitted data constant does not decrease the total throughput or increase the total consumed energy amount. 
\end{proof}

Accordingly, the water level $w_{n}$ is 
bounded by following inequalities: 

\begin{equation}
\displaystyle\sum_{l=n}^{n+u}s(w_{n}-\frac{1}{\gamma_{l}})_{+} \leq e_{n}+ \displaystyle\sum_{l=n+1}^{n+u} H_{l}
\end{equation}
\[
 u=0,1,2,.....,(N-n)
\]
\begin{equation}
\displaystyle\sum_{l=n}^{n+v} sW \log_{2}(1+(w_{n}-\frac{1}{\gamma_{l}})_{+}\gamma_{l})\leq b_{n}+ \displaystyle\sum_{l=n+1}^{n+v} B_{l}
\end{equation}
\[
v=0,1,2,.....,(N-n)
\]

The above inequalities can be rearranged as in the following:
\begin{footnotesize}
\begin{equation}
  w_{n} \leq \frac{e_{n}+ \displaystyle\sum_{l=n}^{n+u} H_{l}+s\displaystyle\sum_{l=n+1}^{n+u}  M_{l}^{(e)}(w_{n}) }{s(u+1)}
\end{equation}
\end{footnotesize}
\[
 u=0,1,2,.....,(N-n)
\]
\vspace{-0.099in}
\begin{footnotesize}
\begin{equation}
\log_{2}(w_{n}) \leq \frac{b_{n}+ \!\!\!\!\!\displaystyle\sum_{l=n+1}^{n+v}\! \! B_{l}+sW\displaystyle\sum_{l=n}^{n+v} M_{l}^{(b)}}{sW(v+1)}
\end{equation}
\end{footnotesize}
\[
v=0,1,2,.....,(N-n)
\]
where 
\[
 M_{l}^{(e)}(w_{n})= \min (\frac{1}{\gamma_{l}},w_{n}), M_{l}^{(b)}= \log_{2}\left( \min (\frac{1}{\gamma_{l}},w_{n})\right)
\]
Therefore, the upper bound for $w_{n}$ can be expressed as 
$w_{n}\leq \min (w_{n}^{e},w_{n}^{b})$, where

\vspace{-0.099in}
\vspace{-0.099in}
\begin{footnotesize}
\begin{equation}
w_{n}^{e} =\displaystyle\min_{u=0,...,(N-n)} \frac{e_{n}+ \displaystyle\sum_{l=n+1}^{n+u} H_{l}+s\displaystyle\sum_{l=n}^{n+u}  M_{l}^{(e)}(w_{n}) }{s(u+1)}
\label{eq:eqwe}   
\end{equation}
\end{footnotesize}
\begin{footnotesize}
\begin{equation}
log_{2}(w_{n}^{b})=\displaystyle\min_{v=0,...,(N-n)} \frac{b_{n}+ \!\!\!\!\!\displaystyle\sum_{l=n+1}^{n+v} \!\!\!\!\!\ B_{l} +sW\displaystyle\sum_{l=n}^{n+v}  M_{l}^{(b)} }{sW(v+1)}   
\label{eq:eqwb} 
\end{equation}
\end{footnotesize}
As there is no other constraint on the water level $w_{n}$, it can be set to $\min (w_{n}^{e},w_{n}^{b})$. Thus,  the throughput maximizing water level is $w_{n}^{*}=\min (w_{n}^{e},w_{n}^{b})$.

If we assume $C(b)$ is infinite for $b> 0$,  all received data should be transmitted to have a finite total energy cost in generalized online lazy scheduling. The following theorem states that the offline optimal solution is a throughput maximizing schedule for this case.  

\begin{theorem}
Consider the case when $C(b_{N+1})=\infty$ for $b_{N+1}> 0$ and there exists a feasible offline solution that transmits all data within the time horizon. Then, throughput maximizing schedule with nondecreasing water levels also minimizes the total energy cost.
\end{theorem}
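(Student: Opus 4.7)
The plan is to reduce the energy-minimization problem to a constrained minimization with a fixed total-throughput side constraint, and then use a local exchange argument to identify the unique minimizer as $w_n^*=\min(w_n^e,w_n^b)$. Since $C(b)=\infty$ for $b>0$, every schedule with finite total cost must deliver all data that arrives during the horizon, so any energy-minimizing candidate and $w_n^*$ transmit the same total number of bits (call this common value $B$). The problem therefore reduces to: minimize $\sum_n s(w_n-1/\gamma_n)_+$ over nondecreasing $\{w_n\}$ satisfying the energy and data causality inequalities, with total rate equal to $B$.

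Next I would exploit the explicit form of the rate. Summing rates across the set $\mathcal{A}$ of active slots (those with $w_n>1/\gamma_n$) turns the total-throughput equality into $\prod_{n\in\mathcal{A}} w_n=\mathrm{const}$, since $B=sW\log_{2}\!\left(\prod_{n\in\mathcal{A}} w_n\gamma_n\right)$. On active slots the energy reduces to $\sum_{n\in\mathcal{A}} s\,w_n$ minus a constant, so we want to minimize $\sum_{n\in\mathcal{A}} w_n$ subject to $\prod_{n\in\mathcal{A}} w_n$ being fixed. A pairwise perturbation $w_n\mapsto w_n(1+\epsilon)$, $w_m\mapsto w_m/(1+\epsilon)$ for active slots $n<m$ preserves this product and, to first order in $\epsilon$, changes the sum by $\epsilon(w_n-w_m)$. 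Hence whenever $w_n<w_m$ and such a perturbation is compatible with monotonicity and causality, the flattening strictly decreases total energy. The optimum must therefore be piecewise constant on each interval over which flattening is feasible, with upward jumps only at slots where one of the inequalities in (\ref{eq:eqwe})--(\ref{eq:eqwb}) is tight.

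The last step is to identify this ``flattest-feasible'' profile with $w_n^*$. By the derivation leading to (\ref{eq:eqwe})--(\ref{eq:eqwb}), $w_n^*=\min(w_n^e,w_n^b)$ is exactly the largest value that can be held constant from slot $n$ onward until one of the causality constraints first becomes tight, which coincides with the piecewise-constant profile singled out above. Theorem 1 rules out any downward adjustment after a binding slot, so no further energy-reducing exchange is available and $w_n^*$ is the unique energy-minimizer. The main obstacle I foresee is making the exchange argument fully rigorous when $m>n+1$, since intermediate slots may restrict which perturbations are admissible; I would address this by first restricting exchanges to adjacent active slots within a common constancy interval of $w_n^*$, and then arguing that each inter-interval jump is forced by a tight causality constraint, precluding any feasible cross-jump flattening.
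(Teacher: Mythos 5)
Your strategy is genuinely different from the paper's. The paper never characterizes the energy minimizer: it starts from the throughput-maximizing schedule $w_n^*=\min(w_n^e,w_n^b)$ and argues by compensation that no energy-reducing deviation has finite cost --- lowering the water level of any active slot strictly lowers throughput, leaving $b_{N+1}>0$ and hence infinite cost, and the loss cannot be compensated by raising another slot's level because the $w_n^*$ are already the largest values consistent with the causality constraints and the nondecreasing structure of Theorem~1. You instead fix the total throughput at $B$ (correctly justified by $C(b)=\infty$ for $b>0$ plus the feasibility assumption), minimize energy under that side constraint, and try to show the minimizer is exactly $w_n^*$. That is a legitimate and potentially more informative route, but as written it stops short of a proof.

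Two concrete gaps. First, the product-preserving pairwise perturbation yields only a first-order, local no-improvement condition on a fixed active set; the feasible region in the $w_n$ variables (positive parts inside the cumulative constraints, plus the fixed-throughput equality) is not convex, and the active set of a candidate minimizer need not coincide with that of $w^*$, so ``no feasible flattening improves'' does not by itself give global minimality. This is repairable (e.g., pass to $x_n=\log_2 w_n$, where the throughput constraint on active slots becomes linear and the energy terms convex), but you would need to carry it out. Second, and more seriously, the identification of your flattest-feasible profile with $w_n^*$ is asserted rather than proven, and it is essentially the content of the theorem: you must show that on each constancy interval the energy minimizer's level equals the value selected by the minima in (\ref{eq:eqwe})--(\ref{eq:eqwb}) (which involve the $w_n$-dependent terms $M_l^{(e)}(w_n)$, $M_l^{(b)}$ and a minimum over all horizons $u,v$, not merely the first tight constraint), and that the same characterization restarts after each binding slot. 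Invoking Theorem~1 to ``rule out any downward adjustment'' does not supply this: Theorem~1 only says optimal water levels are nondecreasing; it does not say the levels preceding a binding slot are as large as feasible. The paper's compensation argument sidesteps exactly this identification step, which is why its proof is so short.
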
   

\begin{proof}
To decrease energy consumption of a throughput maximizing schedule, it is needed to decrease water level for at least one slot where transmission power is nonzero (i.e. $w_{n}^{*}\geq\frac{1}{\gamma_{n}}$) but this also decreases the total throughput and  makes $b_{N+1}$ nonzero. Accordingly, in order to compensate the decrease  in the total throughput, water level of another slot (where transmission power is nonzero) should be increased and this is not possible since water levels are already set to their maximum value satisfying the energy-efficiency constraint that dictates $w_{n}\leq w_{n+1}$ for any time slot $n$. 
\end{proof}

The theorem also holds if the ratio $C(b)/b$ is larger than energy/bit rate during any time slot in all possible transmission schedules which guarantees that any decrease on water levels increases the total energy cost.Hence, offline optimal solutions of generalized online lazy scheduling problem are also throughput maximizing schedules where the cost $C(b)$ is sufficiently large.

As it can been seen in Eq. (\ref{eq:eqwe}) and Eq. (\ref{eq:eqwb}), $\min (w_{n}^{e},w_{n}^{b})$ is a nondecreasing function of $w_{n}$ and converges to a certain value as $w_{n}$ goes to infinity. 

The throughput maximizing water level $w_{n}^{*}$ can be found by iteratiratively evaluating $\min (w_{n}^{e},w_{n}^{b})$:
\begin{equation}
w_{n}^{(k+1)}=\vert_{w_{n}=w_{n}^{(k)}}\min (w_{n}^{e},w_{n}^{b})
\end{equation}
where $w_{n}^{(k)}$ is the $k$th iteration value and $w_{n}^{(1)}=w^{max}_{n}$.

The offline optimal power level $\rho_{n}^{*}$ that maximizes total throughput can be approached by the estimated water level ,$w_{n}^{(k)}$, which gives the optimal water level after infinitely many iteration steps.

\begin{equation}
\rho_{n}^{*}=\lim_{k \rightarrow \infty } (w_{n}^{(k)}-\frac{1}{\gamma_{n}})_{+}
\end{equation}

In practice, a few steps of iteration can be sufficient to obtain estimated water levels which are reasonably close to optimal water levels.  

\subsection{An Online Heuristic}
\label{sec:heuristic}
An online heuristic, which does not assume any prior knowledge of arrival process statistics, can be derived based on the throughput maximizing offline solution. For such a heuristic, the values of $w_{n}^{e}$ and $w_{n}^{b}$ can be estimated as follows:
\begin{equation}
\hat{w}_{n}^{e}= \left\{ \begin{array}{ll}
       \frac{e_{n}-\bar{H}_{n}}{s(N-n)}+\frac{\bar{H}_{n}}{s}+\bar{M}_{n}^{(e)}(w_{n})  & \mbox{; $e_{n} \geq \bar{H}_{n}$}\\
       \frac{e_{n}}{s}+\bar{M}_{n}^{(e)}(w_{n})& \mbox{; o.w.  }\\
        \end{array} \right. 
\end{equation}
\begin{equation}
\log_{2}(\hat{w}_{n}^{b})= \left\{ \begin{array}{ll}
       \frac{b_{n}-\bar{B}_{n}}{sW(N-n)}+\frac{\bar{B}_{n}}{s}+\bar{M}_{n}^{(b)}(w_{n})  & \mbox{; $b_{n} \geq \bar{B}_{n}$}\\
       \frac{b_{n}}{sW}+\bar{M}_{n}^{(b)}(w_{n})& \mbox{; o.w.  }\\
        \end{array} \right. 
\end{equation}
where
\vspace{-0.099in}
\[
\bar{H}_{n}=\frac{1}{n}\displaystyle\sum_{l=1}^{n} H_{l}, \bar{B}_{n}=\frac{1}{n}\displaystyle\sum_{l=1}^{n} B_{l}
\]
\vspace{-0.099in}
\[
\bar{M}_{n}^{(e)}(w_{n})=\frac{1}{n}\displaystyle\sum_{l=1}^{n} M_{l}^{(e)}(w_{n}),\bar{M}_{n}^{(b)}(w_{n})=\frac{1}{n}\displaystyle\sum_{l=1}^{n} M_{l}^{(b)}(w_{n})
\]

%
%
%
%
%
%
%

Then, the estimated value of throughput maximizing water level can be computed iteratively:

\begin{equation}
\hat{w}_{n}^{(k+1)}=\vert_{w_{n}=\hat{w}_{n}^{(k)}}\min (\hat{w}_{n}^{e},\hat{w}_{n}^{b})
\end{equation}

where $\hat{w}_{n}^{(k)}$ is the $k$th iteration of the estimated value of throughput maximizing water level and $\hat{w}_{n}^{(1)}=\min (\frac{e_{n}}{s},\frac{b_{n}}{sW})$.

In general, offline optimal water levels may remain constant for long time periods whereas estimated water levels $\hat{w}_{n}^{(k)}$s exhibit fluctuations. Accordingly, to further improve the heuristic estimation, exponential smoothing can be applied on $\hat{w}_{n}^{(k)}$s as in below.

\begin{equation}
\hat{v}_{n}^{(k)}=\beta\hat{w}_{n}^{(k)}+(1-\beta)\hat{v}_{n-1}^{(k)}
\end{equation}  

The smoothened value $\hat{v}_{n}^{(k)}$ can be used to decide on transmission power.

\begin{equation}
\rho_{n}^{heuristic}=(\hat{v}_{n}^{(k)}-\frac{1}{\gamma_{n}})_{+}
\end{equation}

\section{Numerical Study of the Policies}

\subsection{Online Lazy Scheduling}

A simulation experiment is performed to evaluate the expected threshold lazy scheduling (ETLS) policy against optimal policy using dynamic programming. The hasty and constant policies are also included for comparison. For the packet arrival process, a Markov model having two states ($l(0)=0$ (i.e. no packet arrival) and $l(1)=10$kB (packet arrival of constant size 10 KB)) with transition probabilities $q_{00}=0.9$, $q_{01}=0.1$, $q_{10}=0.58$, $q_{11}=0.42$ where slot duration is $1$ms. The set of data rates $\mathbf{V}$ is based on rates specified in the 802.11g standard (specifically, $6, 9, 12, 18, 24, 36, 48, 54$ Mbit/s.) In the computation of power levels corresponding to standard data rates,  the net data rate is assumed equal to Shannon capacity of an additive white Gaussian noise (AWGN) channel with a noise spectral density $0.83$ nW/Hz and $20$MHz bandwidth. (The average rate of incoming rate is just below $12$ Mbit/s.)The cost function $C\left(b\right)$ is chosen as $3 p(\frac{b}{3})$ and the extension parameter $\alpha$ of ETLS policy is set to $3$.

The performances of the policies are compared to the
optimal online policy considering both total energy consumed,
and the percentage of data they retain by the end of the horizon,
i.e. the percentage of received data that they fail to transmit
by $N$.

Fig.\ref{lazye} and Fig.\ref{lazyb} show the total energy consumption and
percentage of backlogged data of optimal and suboptimal
policies for individual realizations of packet arrival process.

\begin{figure}[htpb]
\centering
  \begin{psfrags}
    \psfrag{A}[t]{Realization Index}
	\psfrag{B}[b]{Energy Consumption (in mJ)}
	\psfrag{H}[l]{\tiny{HASTY}}
	\psfrag{C}[l]{\tiny{CONSTANT}}
	\psfrag{O}[l]{\tiny{OPTIMAL}}
	\psfrag{E}[l]{\tiny{ET POLICY}}
    \includegraphics[scale=0.27]{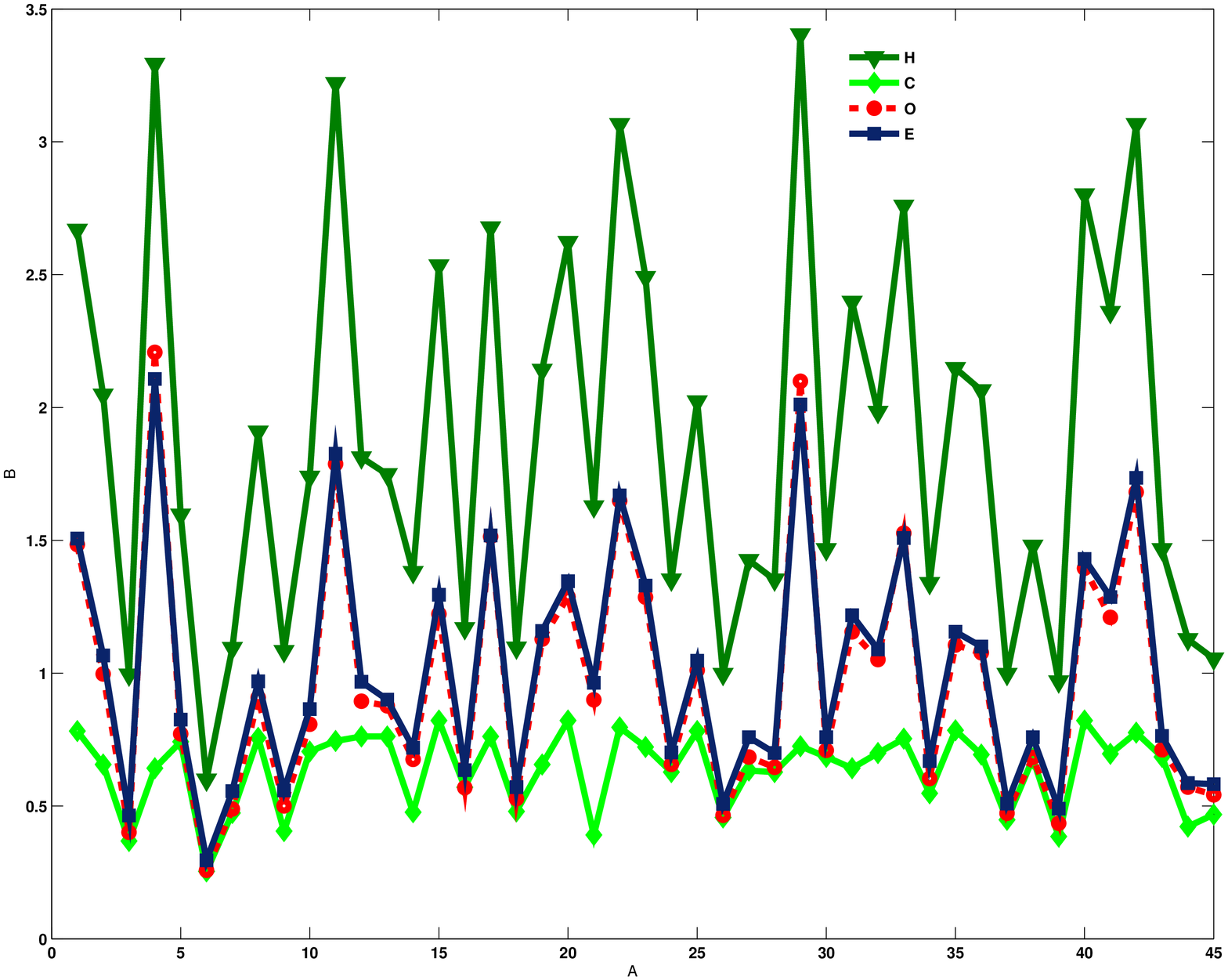}
    \end{psfrags}

\caption{Energy consumption comparison of Hasty, Constant rate policy, Optimal and Expected Threshold(ET) policies within a transmission window of $100$ slots over individual realizations of a Markovian stream of $10$kB packets  having two states ($l(0)=0$ and $l(1)=10$kB) with transition probabilities $q_{00}=0.9$, $q_{01}=0.1$, $q_{10}=0.58$, $q_{11}=0.42$. }
\label{lazye} 
\end{figure}

\begin{figure}[htpb]
\centering
  \begin{psfrags}
    \psfrag{A}[t]{Realization Index}
	\psfrag{B}[b]{Percentage of Backlogged Data}
	\psfrag{H}[l]{\tiny{HASTY}}
	\psfrag{C}[l]{\tiny{CONSTANT}}
	\psfrag{O}[l]{\tiny{OPTIMAL}}
	\psfrag{E}[l]{\tiny{ET POLICY}}
    \includegraphics[scale=0.27]{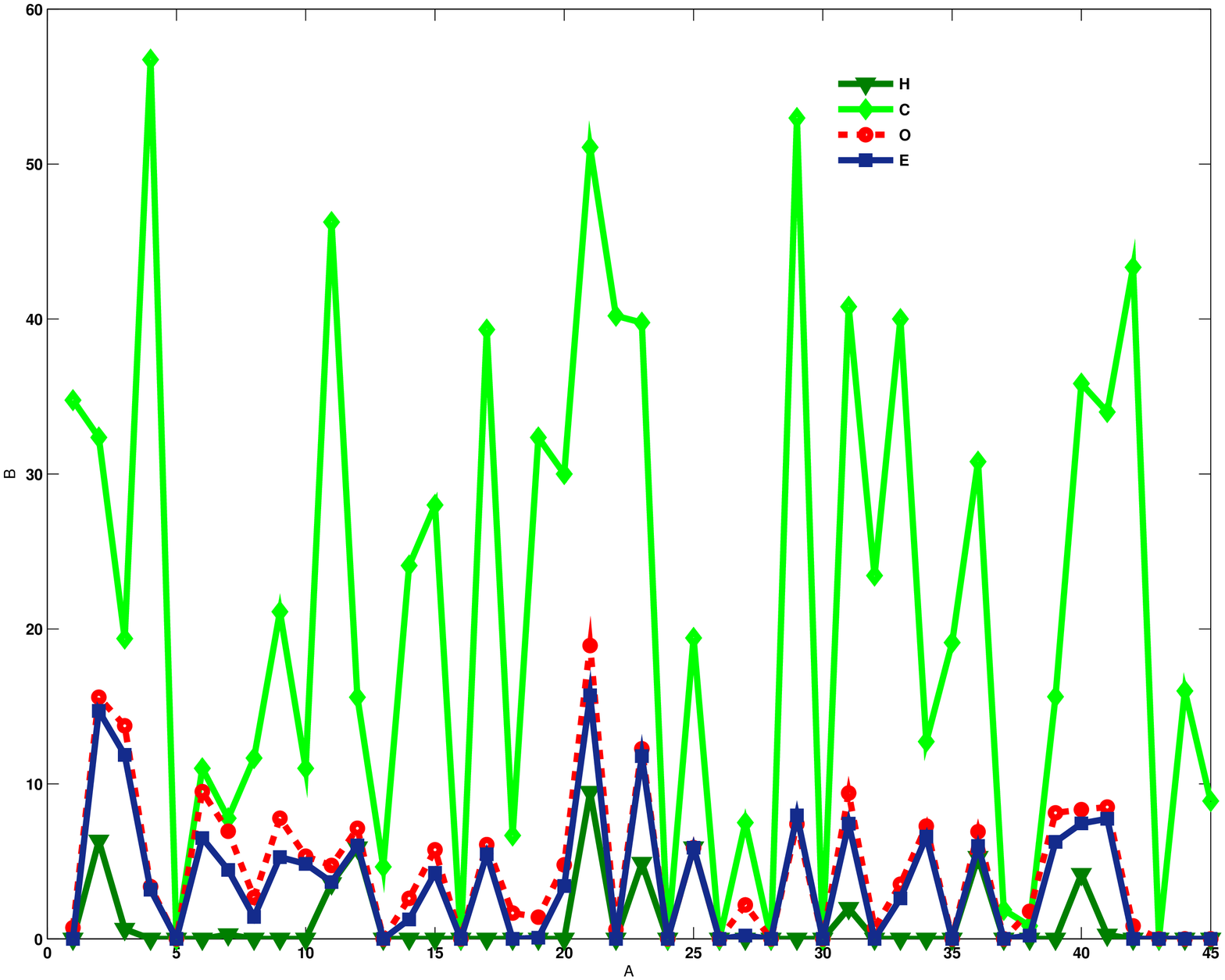}
    \end{psfrags}
\caption{Percentage of backlogged data for Hasty, Constant rate policy, Optimal and Expected Threshold(ET) policies within a transmission window of $100$ slots over individual realizations of a Markovian stream of $10$kB packets  having two states ($l(0)=0$ and $l(1)=10$kB) with transition probabilities $q_{00}=0.9$, $q_{01}=0.1$, $q_{10}=0.58$, $q_{11}=0.42$ .}
\label{lazyb} 
\end{figure}

%
%

\subsection{Generalized Online Lazy Scheduling}

In the second part of numerical study, the throughput
performances of throughput maximizing optimal offline policy
and online heuristic policy are compared.

In this simulation, rates are not restricted to a discrete set. The transmission window is $T=100$ slots where each slot has  duration $1$ms. The same packet arrival model is employed
as described above. Gilbert-Elliot channel is assumed where good ($\gamma^{good}=30$) and bad ($\gamma^{bad}=12$) states appear with equal probabilies. ($P(\gamma_{n}=\gamma^{good})=0.5$ and $P(\gamma_{n}=\gamma^{bad})=0.5$ ) Similarly, in energy harvesting process, energy harvests of $50$nJs are assumed to occur with a probability of $0.5$ at each slot. 


In Fig. \ref{wthroughput}, the throughput performance of the online heuristic policy is compared with the throughput maximizing optimal offline policy for individual realizations. 

A typical realization of packet arrival, energy harvesting and channel fading processes, and corresponding water level profiles is demostrated in Fig. \ref{sample12a}. As can be observed in Fig. \ref{sample12a}, the water levels of optimal offline policy make jumps to higher levels when both energy and packet arrivals have high intensities.

For another sample realization of packet arrival, energy harvesting and channel fading processes, water level profiles of throughput maximizing optimal offline policy and online heuristic policy are shown in Fig. \ref{sample12} (a) and (b). Fig. \ref{sample12} (a) shows water level profiles when transmission window size
$N$ is set to $100$ slots and Fig. \ref{sample12} (b) shows water level profiles when transmission window size is extended to $200$ slots. In the first $100$ slot, water level profiles are similar to each other though ,due to the relaxation of the deadline constraint, both optimal and heuristic water levels sligthly decrease when transmission window size is doubled. 

To illustrate the effect of transmission window size, average throughput performances and energy consumption of throughput maximizing offline optimal policy and online heuristic are compared against varying transmission window size in Fig. \ref{avwthroughput} (a) and (b), respectively. The average performances of both offline optimal policy and online heuristic tend to saturate as transmission window size increases beyond $100$ slots. The experiment is repeated in Fig. \ref{avwthroughputmemory},for the case where energy harvesting process has a memory remaning in the same state with $0.9$ probability and switching to other state with probability $0.1$.

\begin{figure}[htpb]
\centering
  \begin{psfrags}
    \psfrag{A}[t]{Slot Index}
	\psfrag{B}[b]{Power (in Watts)}
	\psfrag{L}[l]{\tiny{HEURISTIC}}
	\psfrag{C}[l]{\tiny{CHANNEL}}
	\psfrag{P}[l]{\tiny{PACKET ARRIVALS}}
	\psfrag{E}[l]{\tiny{ENERGY ARRIVALS}}
	\psfrag{O}[l]{\tiny{OPTIMAL}}
    \includegraphics[scale=0.27]{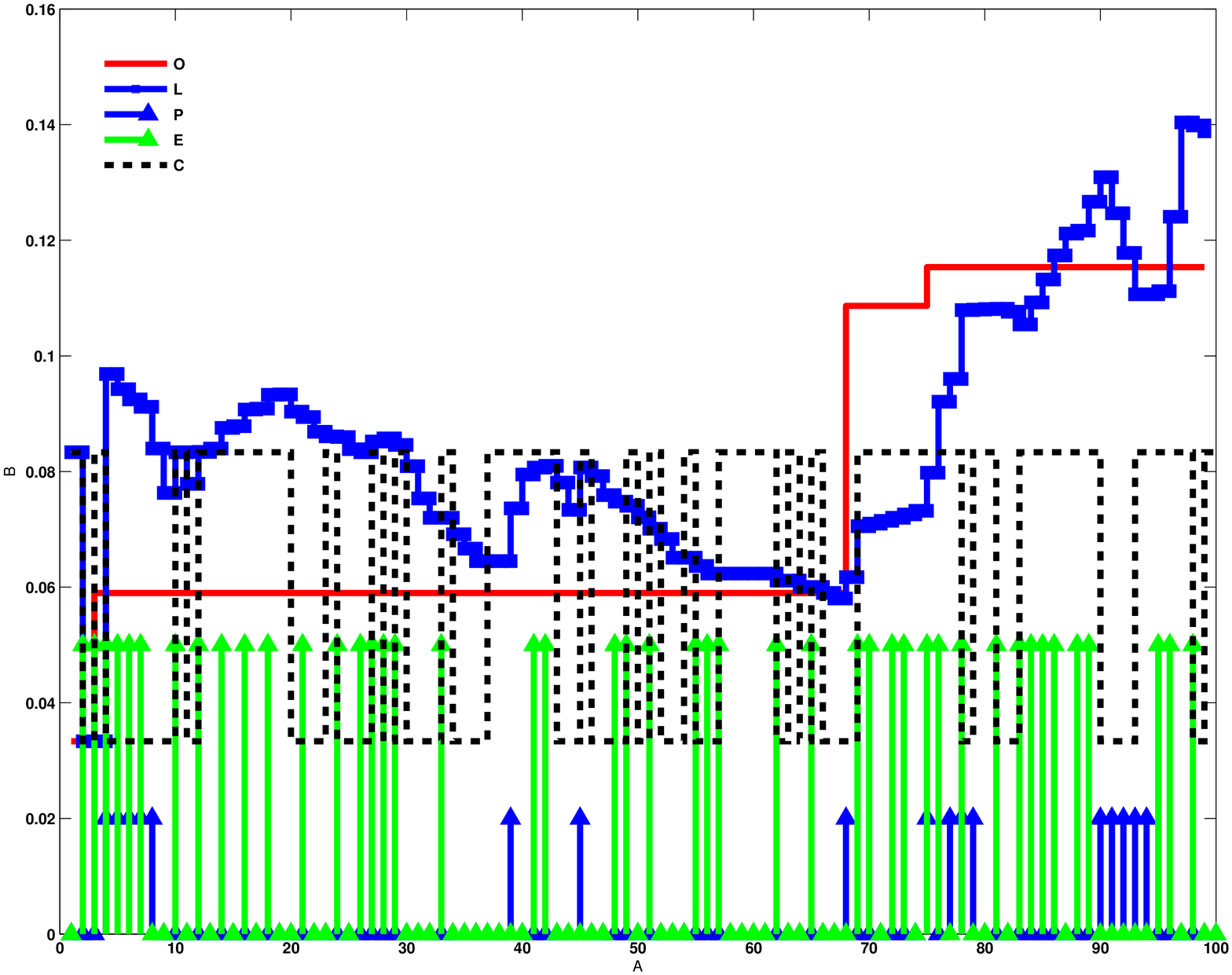}
    \end{psfrags}
\caption{Water level profiles of throughput maximizing optimal offline policy (Red Curve) and online heuristic policy (Blue Curve) for a sample realization of packet arrival, energy harvesting and channel fading processes assuming  a Markovian stream of $10$kB packets (Blue Arrows) having two states ($l(0)=0$ and $l(1)=10$kB) with transition probabilities $q_{00}=0.9$, $q_{01}=0.1$, $q_{10}=0.58$, $q_{11}=0.42$, a Gilbert-Elliot Channel (Dashed Black Curve) where good ($\gamma^{good}=30$) and bad ($\gamma^{bad}=12$) appear with equal probabilies ($P(\gamma_{n}=\gamma^{good})=0.5$ and $P(\gamma_{n}=\gamma^{bad})=0.5$ ) and energy harvests (Green Arrows) of $50$nJs occuring with a probability of $0.5$ at each slot.}
\label{sample12a} 
\end{figure}

\begin{figure}[htpb]
\centering
  \begin{psfrags}
    \psfrag{A}[b]{Power (in mW)}
	\psfrag{B}[t]{Slot Index}
	\psfrag{C}[l]{\tiny{OPTIMAL}}
	\psfrag{D}[l]{\tiny{HEURISTIC}}
\begin{subfigure}[]
    \centering \includegraphics[scale=0.27]{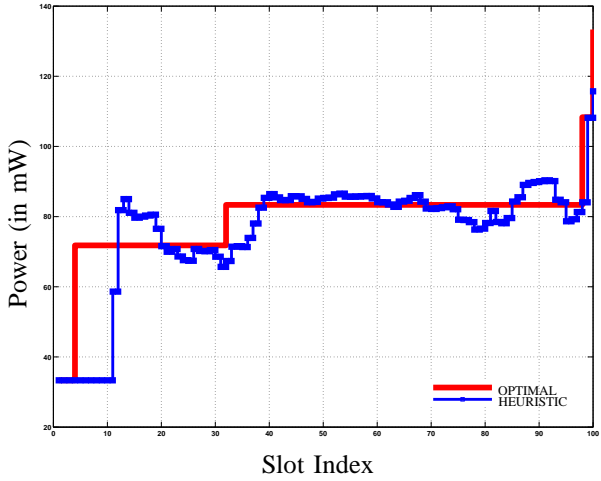}
\end{subfigure}
\begin{subfigure}[]    
    \centering \includegraphics[scale=0.27]{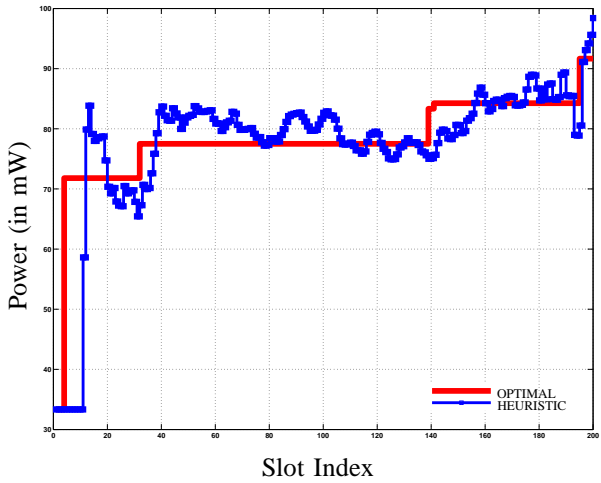}
\end{subfigure}  
    \end{psfrags}  
\caption{Water level profiles of throughput maximizing optimal offline policy (Red Curve) and online heuristic policy (Blue Curve) for a sample realization of packet arrival, energy harvesting and channel fading processes when $N=100$ (a) and $N=200$ (b).}
\label{sample12} 
\end{figure}


\begin{figure}[htpb]
\centering
  \begin{psfrags}
    \psfrag{A}[t]{Slot Index}
	\psfrag{B}[b]{Avg. Throughput (in Mbit/s)}
	\psfrag{P}[b]{Energy Consumption per slot (in nJ)}  
	\psfrag{C}[l]{\tiny{OPTIMAL}}
	\psfrag{D}[l]{\tiny{HEURISTIC}}
\begin{subfigure}[]
    \centering \includegraphics[scale=0.27]{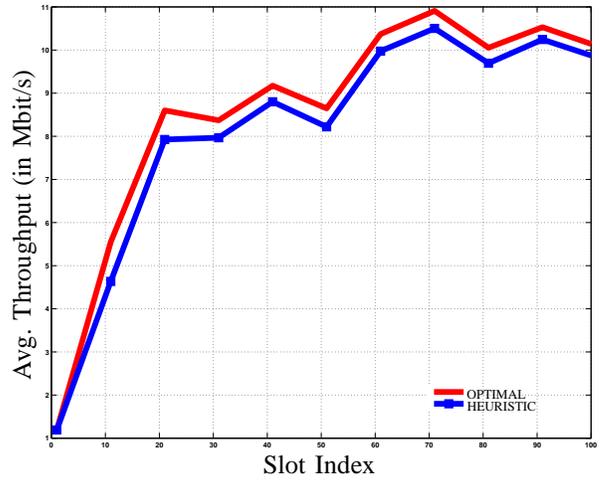}
\end{subfigure}
\begin{subfigure}[] 
     
    \centering \includegraphics[scale=0.27]{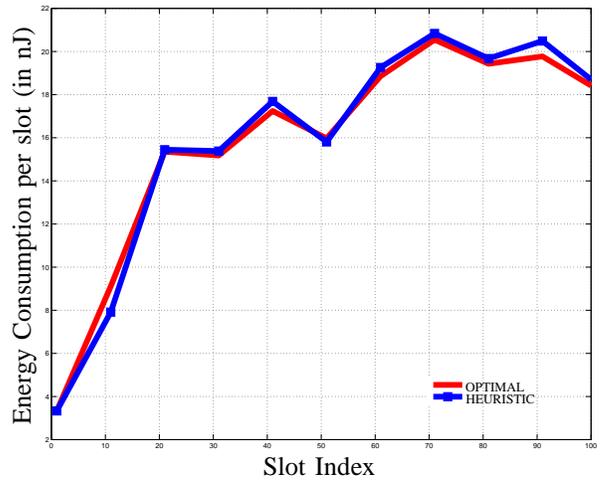}
\end{subfigure}  
    \end{psfrags}  
\caption{Average throughput (a) and energy consumption per slot (b) comparison of throughput maximizing optimal offline policy (red) and online heuristic policy (blue) against varying transmission window size for stationary energy harvesting.}
\label{avwthroughput} 
\end{figure}

\begin{figure}[htpb]
\centering
  \begin{psfrags}
    \psfrag{A}[t]{Slot Index}
	\psfrag{B}[b]{Avg. Throughput (in Mbit/s)}
	\psfrag{P}[b]{Energy Consumption per slot (in nJ)}  
	\psfrag{C}[l]{\tiny{OPTIMAL}}
	\psfrag{D}[l]{\tiny{HEURISTIC}}
\begin{subfigure}[]
    \centering \includegraphics[scale=0.27]{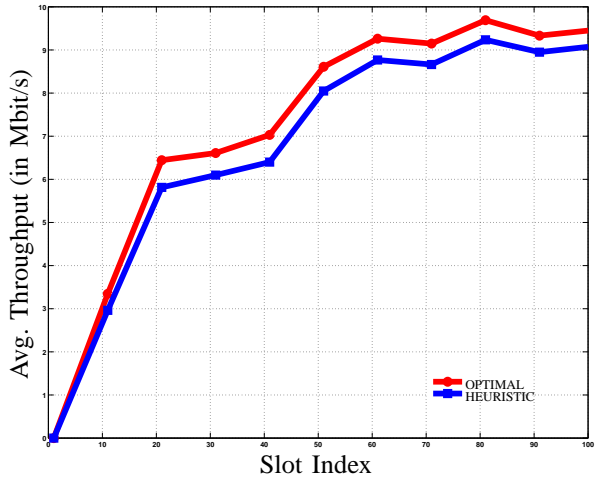}
\end{subfigure}
\begin{subfigure}[] 
     
    \centering \includegraphics[scale=0.27]{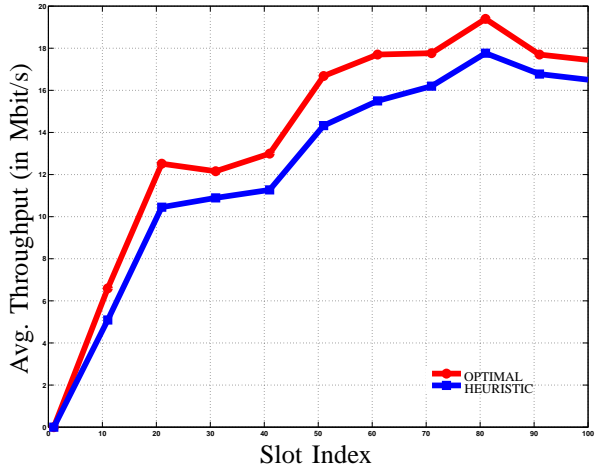}
\end{subfigure}  
    \end{psfrags}  
\caption{Average throughput (a) and energy consumption per slot (b) comparison of throughput maximizing optimal offline policy (red) and online heuristic policy (blue) against varying transmission window size for energy harvesting with memory.}
\label{avwthroughputmemory} 
\end{figure}

\begin{figure}[htpb]
\centering
  \begin{psfrags}
    \psfrag{A}[t]{Sample Path Index}
	\psfrag{B}[b]{Avg. Throughput (in Mbit/s)}
	\psfrag{C}[l]{\tiny{OPTIMAL}}
	\psfrag{D}[l]{\tiny{HEURISTIC}}
    \includegraphics[scale=0.27]{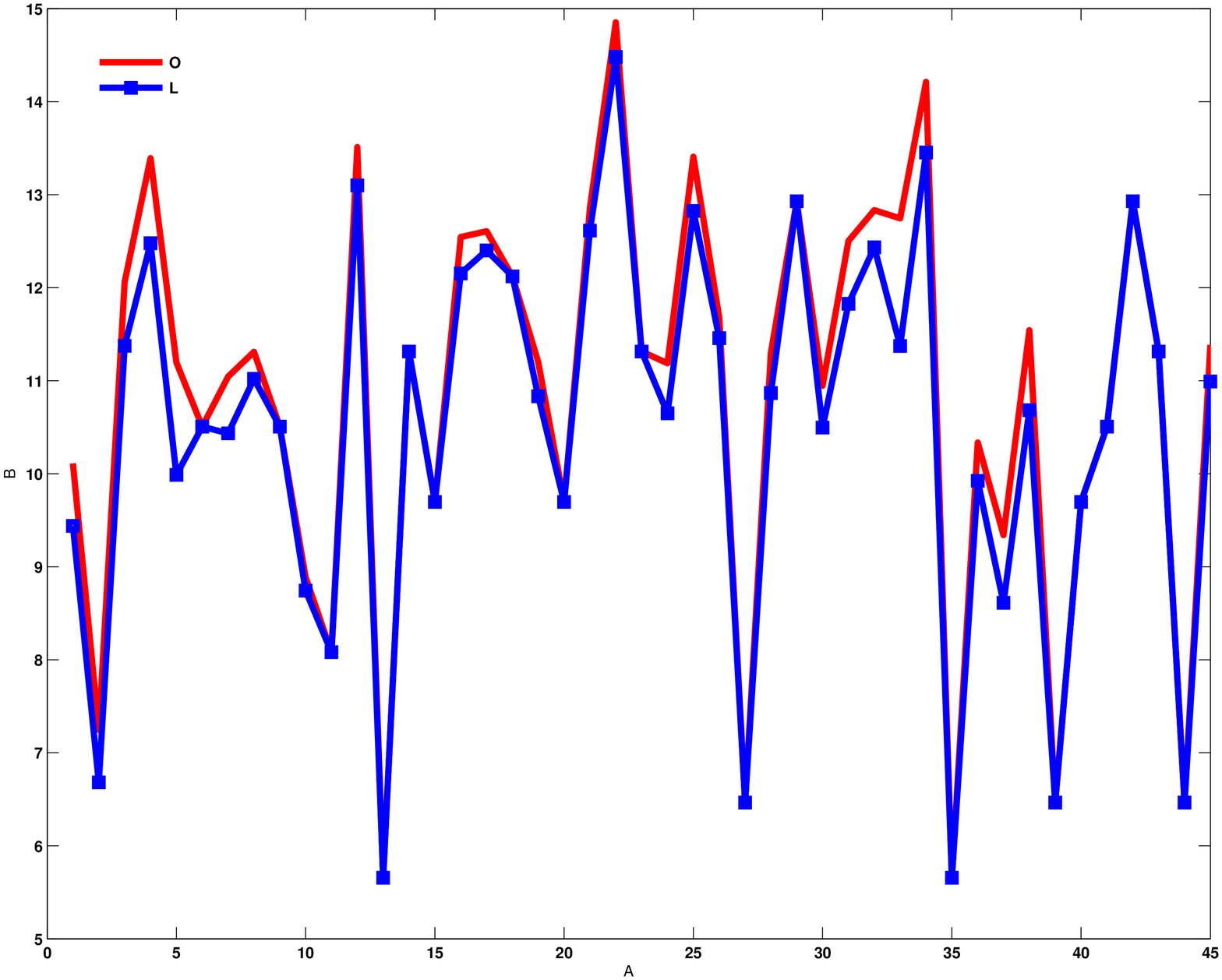}
    \end{psfrags}
\caption{Average Throughput comparison of throughput maximizing optimal offline policy (red) and online heuristic policy (blue) for individual realizations of packet arrival, energy harvesting and channel fading processes assuming  a Markovian stream of $10$kB packets  having two states ($l(0)=0$ and $l(1)=10$kB) with transition probabilities $q_{00}=0.9$, $q_{01}=0.1$, $q_{10}=0.58$, $q_{11}=0.42$, a Gilbert-Elliot Channel where good ($\gamma^{good}=30$) and bad ($\gamma^{bad}=12$) appear with equal probabilies ($P(\gamma_{n}=\gamma^{good})=0.5$ and $P(\gamma_{n}=\gamma^{bad})=0.5$ ) and energy harvests of $50$nJs occuring with a probability of $0.5$ at each slot.}
\label{wthroughput} 
\end{figure}
\section{Conclusion}

In this paper, finite horizon energy efficient transmission schemes are investigated. First, considering only packet arrivals, an online problem of minimizing total energy cost of transmission within a finite horizon and its optimal solution by dynamic programming is posed and expected threshold policy is proposed as a close-to-optimal heuristic. It is also shown by numerical studies that simpler policies which could have sufficient long-term performances can fail in the short term.

Then, a more general problem, considering energy arrivals
as well as channel variation, is defined. The relationship of the
optimal offline solution to throughput maximization is shown
analytically and its optimality for energy cost minimizing
is also proven under certain conditions. Based on offline
throughput maximizing solution, an online heuristic , which
does not require prior statistical knowledge, is presented, and
is observed to achieve close to offline optimal performance in
simulations.

\section*{Acknowledgment}

The authors would like to thank Turk Telekom and TUBITAK (grant no. 110E252) for funding this work.



\bibliographystyle{IEEEtran}
%


\bibliography{LazySched_BlackSeaComm} 


\end{document}